\begin{document}
\title{Sensing/Decision-Based Cooperative Relaying Schemes With Multi-Access Transmission: Stability Region And Average Delay Characterization}
\author{\IEEEauthorblockN{$^\dagger$Mohamed Salman, $^\dagger$Amr El-Keyi, $^\dagger$Mohammed Nafie, and  $^\star$Mazen Omar Hasna} \\
\IEEEauthorblockA{$^\dagger$Wireless Intelligent Networks Center (WINC), Nile University, Giza, Egypt \\
$^\star$College of Engineering, Qatar University Doha, Qatar\\
{\tt Mohamed.Salman@nileu.edu.eg,
\{aelkeyi,mnafie\}@nileuniversity.edu.eg, hasna@qu.edu.qa}}}

\maketitle

\begin{abstract}

We consider a cooperative relaying system which consists of a
number of source terminals, one shared relay, and a common
destination with multi-packet reception (MPR) capability. In this
paper, we study the stability and delay analysis for two
cooperative relaying schemes; the sensing-based cooperative (SBC)
scheme and the decision-based cooperative (DBC) scheme. In the SBC
scheme, the relay senses the channel at the beginning of each time
slot. In the idle time slots, the relay transmits the packet at
the head of its queue, while in the busy one, the relay decides
either to transmit simultaneously with the source terminal or to
listen to the source transmission. The SBC scheme is a novel paradigm that utilizes the spectrum more efficiently than the other cooperative schemes because the relay not only exploits the idle time slots, but also has the capability to mildly interfere with the source terminal. On the other hand, in the DBC
scheme, the relay does not sense the channel and it decides either
to transmit or to listen according to certain probabilities.
Numerical results reveal that the two proposed schemes outperform
existing cooperative schemes that restrict the relay to send only
in the idle time slots. Moreover, we show how the MPR capability
at the destination can compensate for the sensing need at the
relay, i.e., the DBC scheme achieves almost the same stability
region as that of the SBC scheme. Furthermore, we derive the
condition under which the two proposed schemes achieve the same
maximum stable throughput.

\end{abstract}


\IEEEpeerreviewmaketitle
\section{Introduction}

\IEEEPARstart{U}{ser} demands for high data rates and services are
expected to increase exponentially in the next decade. According
to the Federal Communication Commission (FCC), about 70$\%$ of the
allocated spectrum in the US is not efficiently utilized. Hence,
effective utilization of the available spectrum is a critical
issue that has recently gained great attention. Cognitive radios
and cooperative diversity have emerged as promising techniques to
improve the wireless network performance in an attempt to exploit
the unutilized spectrum
\cite{mitola1999cognitive}--\cite{laneman2004cooperative}.
Intuitively, by relaying the messages and emptying the queues of
the primary sources, the secondary node creates more opportunities
for its own transmission.

Cooperative diversity is a new paradigm for wireless networks, and
hence, deep investigation is needed to fully understand the impact
of this new paradigm on different network layers. Most of the work
on cooperative communication has focused on the physical layer
aspects of the problem \cite{sendonaris2003user}. Other works
\cite{sadek2007cognitive}--\cite{rong2009cooperation}, however,
have implemented cooperation at the network protocol level, and
performance gains in terms of stable throughput, average delay,
and energy efficiency were illustrated. In
\cite{sadek2007cognitive}, the authors proposed a novel cognitive
multiple-access strategy, where the relay exploits the bursty
nature of the transmission of the source terminals via utilizing
their periods of silence to enable cooperation. In this strategy,
no extra channel resources are allocated for cooperation, and
hence, this improve the spectral efficiency. Although the
proposed strategy provides significant performance gain over
conventional relaying strategies, the relay is restricted to
transmit only in the idle time slots. However, allowing the relay
to send simultaneously with the source with certain probability
can improve the network performance.

Utilization of multi-packet reception (MPR) capability has
received considerable attention in the literature. An MPR model
was first introduced in \cite{ghez1988stability}. Multi-access
channel (MAC) systems with MPR capability have been addressed in
the literature in different contexts
\cite{parandehgheibi2008information}--\cite{naware2005stability}.
However, most of which do not deal with cognitive or cooperative
systems. In \cite{krikidis2010stability}, a MAC network with two
primary users, a cognitive relay, and a common destination was
considered with a symmetric configuration. The primary users,
simultaneously, access the channel to deliver their packets to a
common destination, i.e., the relay and the destination have MPR capability. The authors assume that the relay perfectly
senses the channel, i.e., it transmits during idle time slots,
where the primary nodes are not transmitting. In this scheme, the
authors assume that the primary users transmit simultaneously in
each time slot which may cause degradation in the network
performance especially in the presence of weak channels. Moreover, as the number of primary users increases, the complexity of the relay and the destination increases because these nodes have to decode the message of all the transmitting nodes in each time slots. In
\cite{fanous2013stable}, the performance of an Ad-Hoc secondary
network with $N$ secondary nodes accessing the spectrum licensed
to a primary node was demonstrated. Both cases of perfect and
imperfect sensing were considered. In the perfect sensing case,
the secondary nodes do not interfere with the primary node and
thus do not affect its stable throughput. However, with imperfect
sensing, the secondary nodes control their transmission
parameters, such as the power and the channel access
probabilities, to limit the interference on the primary node. To
compensate for the effect of interference, the authors explore the
use of the secondary nodes as relays for the primary node traffic.

The average delay encountered by the packets is one of the most
important metrics in evaluating the performance of wireless
networks. In \cite{rong2012cooperative}, the delay analysis for a
network consists of one primary user, one cognitive relay, and a
common destination was presented, where the relay is restricted to
send only in the idle time slots with full priority for the
relaying queue. Moreover, in \cite{ashour2013cooperative}, the
delay analysis for the same network with a randomized cooperative
policy was investigated, where the relay node serves either its
own data or the primary packets with certain service
probabilities. The authors, in \cite{ashour2013cooperative}, showed that the randomized policy enhances the cognitive relay delay at the expense of a slight
degradation in the primary user one.

In this paper, we consider a general number of source terminals,
one half duplex relay, and a common destination that has MPR
capability, i.e., the destination can decode the message of more
than one transmitting node in the same time slot. We consider a
slotted time division multiple access (TDMA) framework in which
each time slot is assigned to one source terminal only. We propose
two cooperative schemes. The first scheme is the sensing-based
cooperative (SBC) scheme, where the relay senses the channel at
the beginning of each time slot. If the relay detects an idle time
slot, the relay transmits the packet at the head of its queue.
Alternatively, if the relay detects a busy time slot, it decides
probabilistically to either listen to the source packet and store
it if the destination fails to decode it successfully or to
interfere with the source terminal transmission. We optimize this
probabilistic scheme to maximize the network aggregate throughput
and characterize the stability region. The main difference between
this scheme and that in \cite{sadek2007cognitive} is that, in
\cite{sadek2007cognitive}, the authors restricted the relay to
send only in the idle time slot. Moreover, unlike
\cite{krikidis2010stability}, the relay interferes with the source
terminals with certain probability to limit the adverse effects of
the interference. 

In the SBC scheme, the relay depends on the
sensing information to decide either to listen or to transmit.
Although we do not take into consideration the sensing errors and
the consumption of power, in practice these factors can negatively
affect the performance of the system. Hence, we propose the
decision-based cooperative (DBC) scheme, which is the second
cooperative scheme proposed in this work. In the DBC scheme,
unlike \cite{fanous2013stable} that assumes imperfect sensing, the
relay does not sense the channel and it decides, according to
certain probabilities, either to listen to the source or to
transmit whether the time slot is idle or busy. It is worth noting that the complexity of our proposed schemes does not increase as the number of source terminals increases, unlike \cite{krikidis2010stability}, because for any number of source terminals the destination decodes, at most, the packets of two transmitting nodes; one of the source terminals and the relay.

In this work, we focus on the medium-access layer and address the
impact of the proposed schemes on multiple-access performance
metrics such as the stable throughput region and average delay. We
show that in the two proposed schemes, the queues of the source
terminals and those of the relay are interacting. Since the
stability analysis for more than two interacting queues is
difficult, we resort to a stochastic dominance approach. The
stability analysis of interacting queues was initially addressed
in \cite{tsybakov1979ergodicity}, and later in
\cite{rao1988stability}, where the dominant system approach was
explicitly introduced. The average delay is also an important
performance measure, and its analysis illustrates the fundamental
trade-off between the rate and reliability of communication. Delay
analysis for interacting queues is a notoriously hard problem that
has been investigated in \cite{sidi1983two} and in
\cite{naware2005stability} for ALOHA with MPR channels. Hence, we
also utilize stochastic dominance to approximate the average delay
of the proposed schemes.

Our contributions in this paper can be summarized in the following
points
\begin{itemize}
\item For the two proposed schemes, we present the stability
analysis and derive the stability conditions for each queue in the
system. We formulate an optimization problem to maximize the
weighted aggregate stable throughput of the network and
characterize the stability region via optimizing the probability
of each action taken by the relay. The problem is formulated as
non-convex quadratic constrained quadratic programming (QCQP)
optimization problem \cite{huang2014randomized}. We use the
feasible point pursuit-successive convex approximation (FPP-SCA)
algorithm \cite{mehanna2014feasible} to achieve a good feasible
solution by approximating the non-convex constraints as linear
ones. \item We analyze the average delay performance for the two
proposed schemes and derive approximate delay expressions using
the dominant system approach. \item We show that the SBC scheme
provides significantly better performance over existing
cooperative schemes as in \cite{sadek2007cognitive}. Moreover, the
SBC scheme exploits the unutilized spectrum more efficiently than
other schemes, because the relay not only transmits in the idle
time slots but also has the capability to, simultaneously,
transmit its packets with the source terminals. The relay uses
this new attribute in a mild way to mitigate the negative effects
of the interference and to enhance the maximum aggregate stable
throughput of the network.  \item We demonstrate that the DBC
scheme, in certain cases, achieves the same stability region
achieved by the SBC scheme. We also illustrate how the MPR
capability at the destination can compensate for the need for the
relay to detect the idle time slots to transmit its packets.
Furthermore, we show that removing the MPR capability from the
destination, in absence of sensing at the relay, causes
catastrophic degradation in the performance of the system. \item
We derive the channel condition under which the two proposed
schemes achieve the same maximum stable throughput. Under this
condition, sensing the channel by the relay becomes useless, and
hence, the two proposed schemes provide the same performance.
\end{itemize}

The remainder of the paper is organized as follows. In Section
\ref{system_model}, we describe the system model. The SBC scheme
with its stability and delay analysis is introduced in Section
\ref{SBC}, followed by the DBC scheme in Section \ref{DBC}.
Numerical results are then presented in Section
\ref{numerical}. We demonstrate how the MPR capability
at the destination can compensate for the relay need to sense
the channel in Section \ref{discussion}.
Finally, the paper is concluded in Section \ref{conclusion}.


\section{System Model}
\label{system_model}

We consider the uplink of a TDMA system that consists of $M $
source terminals  $ \{s_i\}_{i{=}1}^M$, one shared relay ($r$),
and one common destination ($d$), as shown in Fig. \ref{model}. We
define the set of the transmitting nodes $T{=} \{S$, $r \}$, where
$S{=} \{s_1,..., s_M\}$ is the set of source terminals, and that
of the receiving nodes $L{=}\{r,d\} $. The source terminals access
the channel by dividing the available resources among them, i.e.,
time slots in this case. Each terminal is allocated a fraction of
the time. Let $w_i$ denote the fraction of time allocated to the
source terminal $s_i$, where $ i \in \{1,2,.....,M\}$. We assume
continuous values of the resource sharing vector $\boldsymbol{w}
=[w_1,w_2,. . .,w_M]$. Hence, we can define the set of all
feasible resource sharing vector as follows
\begin{equation}
A= \Big\{  \boldsymbol{w}=[w_1,w_2,. . .,w_M] \in R_+^M, \enspace
\sum\limits_{i=1}^{M} w_i=1  \Big\}.
\end{equation}

\begin{figure}[t]
  \centering
\includegraphics[width=0.59\linewidth]{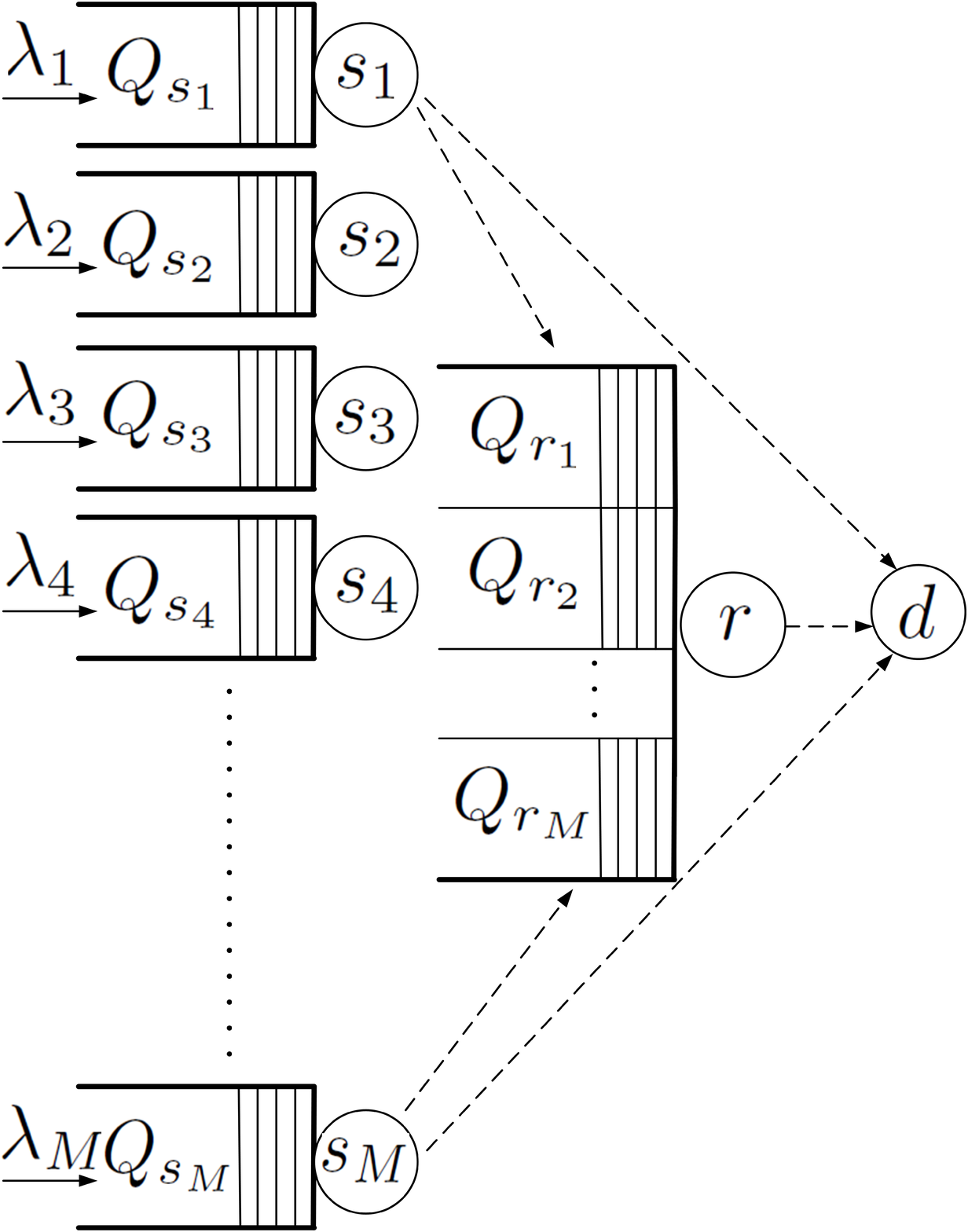}
\caption{ System Model    \label{model}}
\end{figure}

First, we describe the physical layer model. All wireless links are assumed to be stationary, frequency non-selective, and Rayleigh block fading.
The fading coefficients $h_{m,n}$, where $m  {\in} T $ and $ n {\in} L $, are assumed to be constant during each slot duration, but change independently from one time slot to another according to a circularly symmetric complex Gaussian distribution with zero mean and variance $\rho_{m,n}^2 $. All
wireless links are corrupted by additive white Gaussian noise (AWGN) with zero mean and unit variance.
All nodes transmit with fixed power $P$. An outage occurs when the instantaneous capacity of the link ($m$, $n$) is lower than the transmission rate $R$. Each link is characterized by the probability
\begin{equation}
\label{f}
f_{mn}= \mathbb{P} \{  R < \log_{2}(1+P | h_{m,n}| ^{2} ) \} = \exp  \bigg({-}\dfrac{2^{R}-1}{P \rho_{m,n}^{2}}   \bigg)
\end{equation}
which denotes the probability that the link ($m$, $n$) is not in outage. Let $g_{mn}^{I}$ denote the probability that the link
($m$, $n$) is not in outage in presence of interference from node
$I$, where $m, I  \in T $ and $n \in L$. In
\cite{krikidis2010stability}, the authors derived the same term
but for symmetric configuration. We relax this assumption and
re-derive the term to fit our model in Appendix A.

Next, we describe the medium access control layer model. Time is slotted with fixed slot duration, and the transmission of a packet takes exactly one time slot. Each source terminal has an infinite buffer (queue) to store its own incoming packets. Packet arrivals of the
source terminals are independent and stationary Bernoulli processes with
means $\lambda_i$ (packets per slot), where $i \in \{1,2, . . . ., M\}$.
The relay has $M$ relaying queues ($Q_{r_1}$, $Q_{r_2}$, . . ., $Q_{r_M}$) to store the packets of the source terminals that are not successfully decoded at the destination. Let $Q_l^t$ denote the
number of packets in the $l$-\textit{th} queue at the beginning of time slot
$t$. The instantaneous evolution of the $l$-\textit{th} queue length is given by
\begin{equation}
Q_{l}^{t+1}= (Q_{l}^t -Y_{l}^t)^{+}+X_{l}^t
\end{equation}
where $l \in \lbrace s_1,...s_M,r_1,...,r_M \rbrace$ and $(x)^{+}{=}\max
\{x,0\}$. The binary random variables $Y_{l}^t$ and $X_{l}^t$,
denote the departures and arrivals of $Q_l$ in time slot $t$,
respectively, and their values are either 0 or 1.

\section{Sensing-Based Cooperative Scheme }

\label{SBC} In this section, we introduce the proposed SBC scheme
followed by its stability and delay analysis. We assume that the
relay can sense the channel, and perfectly detect the idle time
slots. Moreover, we assume that the errors and delay in packet
acknowledgement (ACK) are negligible, which is reasonable for
short length ACK packets as low rate codes can be employed in the
feedback channel \cite{sadek2007cognitive}. In the SBC scheme, the
source terminals operate according to the following rules:
\begin{itemize}
\item Each terminal transmits the packet at the head of its queue in its assigned time slot, whenever the queue is not empty.
\item If the destination decodes the packet successfully, it sends an ACK which can be heard by both the transmitting terminal and the relay. The terminal drops this packet upon hearing the ACK.
\item If the destination does not receive the packet successfully but the relay does, then the relay stores this packet at the corresponding relaying queue and sends an ACK to the source terminal. Afterwards, the relay is responsible for conveying this packet to the destination.
\item If a packet is received successfully by either the destination or the relay, the packet is removed from the terminal's queue. Otherwise, the source terminal retransmits the packet in its next assigned time slot.
\end{itemize}
It is worth noting that if the relay and the destination decode a packet successfully, the relay does not store this packet in its queue because the packet is already delivered to the destination.

Next, we illustrate the transmission and reception policy of the relay using the SBC scheme. Let us consider the time slots allocated to the source terminal $s_i$. The relay takes one of the following two actions after sensing the channel to detect the queue state of $Q_{s_i}$.
\begin{itemize}
\item If $Q_{s_i}$ is empty, i.e. $s_i$ is not transmitting, the relay transmits the packet at the
head of $Q_{r_i}$. \item If $Q_{s_i}$ is not empty, the relay
transmits a packet from $Q_{r_j}$, where $j \in \{1,2,...,M\}$,
with probability $\beta_{ij}$ or listens to source terminal
transmission with probability
$1{-}\sum\limits_{j=1}^{M}\beta_{ij}$.
\end{itemize}
Hereafter are some important remarks. 
First, the relay stores the packet of the source terminal if it
decides to listen to the terminal's transmission and successfully
decodes the packet while the destination fails to decode it.
Second, we assume that the relay is half duplex, and hence, it can
not transmit and receive a packet in the same time slot. Thus, the
relay can not receive packets when it decides to interfere with
the source terminal.

From the given description of the proposed cooperative scheme, it is clear that the decision taken by the relay depends on the queue state of the source terminals, and this causes an interaction between the relay's queues and those of the source terminals. Stability of interacting queues is complex \cite{tsybakov1979ergodicity}, thus, we resort to using the  dominant system approach. To perform the stability analysis of $s_i$, we assume that all other source terminals $s_j$, where $j \neq i $, are saturated, i.e, their queues always have packets. Moreover, we assume that $Q_{r_i}$ transmits a dummy packet whenever the relay decides to interfere with the source transmission and $Q_{r_i}$ is empty, where $i\in \{1,2,...,M\}$. This dominant system simplifies the stability analysis and provides an inner bound on the stability region.

\subsection{Stability Region Analysis}
\label{throughput}

In this part, we characterize the stability region for the
SBC scheme taking in consideration the dominant system described above. Then, we formulate
an optimization problem to maximize the aggregate throughput and characterize the stability region via
optimizing the values of $ \{ \beta_{ij} \}_{i,j{=}1}^{M} $ subject to constraints that ensure the stability of the system.
The stability of the overall system requires the stability of each
individual queue. From the definition in \cite{szpankowski1994stability}, the queue is stable if
\begin{equation}
  \lim_{t\to \infty} \mathbb{P} \{Q_l^t < x \} = F(x)\enspace   \text{and} \enspace  \lim_{x\to \infty} F(x) =1
\end{equation}
We can apply Loynes' theorem to check the stability of a queue
\cite{loynes1962stability}. Loynes' theorem states that if the
arrival process and the service process of a queue are strictly
stationary, then the queue is stable if and only if the average
service rate is greater than the average arrival rate of the
queue.

In the dominant system, a packet departs $Q_{s_i}$, where $i \in
\{1,2,...,M \}$ in two cases. First, if it is successfully decoded by the destination when the relay decides to interfere with $s_i$ from one of its relaying queues. Second, if it is successfully decoded by at least one node, i.e., the destination or the relay when the relay decides to listen to $s_i$. Thus, the average service rate of $Q_{s_i}$ for certain $w_i$ is given by
\begin{equation}
\label{muei_enhance}
\mu_{i}= w_i \Big(\sum\limits_{j=1}^{M} \beta_{ij} g_{s_id}^r+(1-\sum\limits_{j=1}^{M} \beta_{ij} )(f_{s_id}+(1-f_{s_id}) f_{s_ir})\Big).
\end{equation}
For stability of $Q_{s_i}$, the following condition must be
satisfied
\begin{equation}
\label{lambdai_enhance}
\lambda_{i} <  w_i \Big(\sum\limits_{j=1}^{M} \beta_{ij} g_{s_id}^r+(1-\sum\limits_{j=1}^{M} \beta_{ij} )(f_{s_id}+(1-f_{s_id}) f_{s_ir})\Big).
\end{equation}

A packet arrives at $Q_{r_i}$ if the following two conditions are met. First, an outage occurs in
the link between $s_i $ and the destination node while no outage
occurs in the link between $s_i$ and the relay and this happens
with probability $(1{-}f_{s_id})f_{s_ir}$. Second, $Q_{s_i}$ is
not empty and the relay decides to receive from $s_i$ and this happens with probability $\frac{\lambda_i}{\mu_i} (1-\sum\limits_{j=1}^{M} \beta_{ij})$. Thus, the average arrival rate of $Q_{r_i}$ is given by
\begin{equation}
\label{lambdari_enhance}
\lambda_{r_i}= w_i (1{-}f_{s_id})f_{s_ir} \frac{\lambda_i}{\mu_i} (1-\sum\limits_{j=1}^{M} \beta_{ij})
\end{equation}

In the dominant system, the packet departs $Q_{r_i}$ in three cases. First, when $Q_{s_i}$ is empty a packet departs $Q_{r_i}$ if no outage occurs in the link between the relay and the destination. Second, when $Q_{s_i}$ is not empty a packet departs $Q_{r_i}$ if the relay decides to interfere with $s_i$ with a packet from $Q_{r_i}$. Third, if the relay decides to interfere with saturated source terminal $s_j$, where $j \neq i$, with a packet from $Q_{r_i}$. Thus, the average service rate of $Q_{r_i}$ can be expressed by
\begin{equation}
\label{muri_enhance}
\mu_{r_i}{=} w_i  \bigg( \Big( 1{-}\frac{\lambda_i}{\mu_i} \Big) f_{rd} {+} \frac{\lambda_i}{\mu_i}\beta_{ii} g_{rd}^{s_i} \bigg){+}  \sum\limits_{j \neq i,  j=1}^{M}w_j  \beta_{ji} g_{rd}^{s_j}.
\end{equation}
For the stability of $Q_{r_i}$, the service rate must be higher than
the arrival rate, i.e., $\lambda_{r_i}< \mu_{r_i}$ , and hence, we have
\begin{equation}
\label{lambdaii_enhance}
\lambda_i< \frac{w_i f_{rd}+ \sum\limits_{j=1, j \neq i}^{M} w_j \beta_{ji}g_{rd}^{s_j}}{w_i \Big( (1-\sum\limits_{j=1}^{M} \beta_{ij})(1-f_{s_id})f_{s_ir}+f_{rd}-\beta_{ii}g_{rd}^{s_i} \Big)} \mu_i
\end{equation}

Let $\lambda_s^i$ denote the maximum stable throughput for the \textit{i}-th source terminal at a certain value of $w_i$. To guarantee the stability of the entire network the conditions in (\ref{lambdai_enhance}) and (\ref{lambdaii_enhance}) must be satisfied. Hence, the maximum stable throughput is given by
\begin{equation}
\label{max_throughput_SBC}
    \lambda_s^i < \min\{\mu_i,\mu_{u_i}\}
\end{equation}
where $\mu_{u_i}$ can be obtained easily from (\ref{lambdaii_enhance}) as follows
\begin{equation}
\label{muui_enhance}
\mu_{u_i}= \frac{w_i f_{rd}+ \sum\limits_{j=1, j \neq i}^{M} w_j \beta_{ji}g_{rd}^{s_j}}{w_i \Big( (1-\sum\limits_{j=1}^{M} \beta_{ij})(1-f_{s_id})f_{s_ir}+f_{rd}-\beta_{ii}g_{rd}^{s_i} \Big)} \mu_i
\end{equation}

Next, we optimize the values of $ \{\beta_{ij}\}_{i,j{=}1}^M$ to achieve the maximum weighted aggregate throughput of the network subject to constraints that ensure the stability of all queues. Thus, the optimization problem can written\footnote{We do not restrict the second and the third constraints to be satisfied with strict inequality as in (\ref{lambdai_enhance}) and (\ref{lambdaii_enhance}). However, the solver uses the interior point method which provide a strictly feasible point.}, for certain $w_i$, as follows
\begin{equation}
\label{opt1}
\begin{aligned}
& \underset{ \enspace \enspace  \{\beta\}_{i,j{=}1}^M }{\text{max}}
&&   \sum\limits_{i=1}^{M} x_i \lambda_s^i \\
& \text{subject to}
&&  0 \leq  \sum\limits_{j=1}^{M}\beta_{ij} \leq 1,  & \enspace i \in \{1,2,...,M\} \\
&&& \lambda_s^i  \leq \mu_i , & \enspace i \in \{1,2,...,M\}\\
&&& \lambda_s^i  \leq \mu_{u_i} , & \enspace i \in \{1,2,...,M\} \\
&&& 0 \leq  \beta_{ij}  \leq 1, &  \enspace i,j \in \{1,2,...,M\}\\
\end{aligned}
\end{equation}
where $x_i$ is the weight assigned to $\lambda_s^i$. To obtain the
stability region, we have to vary the value of each $x_i$, where
$i \in \{1,2,...,M\}$, from zero to one for each $w_i$. Then vary
$w_i$ also from zero to one to scan the whole stability region,
i.e., the convex hull of all the obtained values of
$\{\lambda_s^i\}_{i=1}^M$ provides the stability region. On the
other hand, we set all values of $x_i$ by ones to calculate the
maximum aggregate stable throughput of the network.

In the above problem, the summation in the first constraint is the probability that the relay transmits a packet from one of its relaying queue while $Q_{s_i}$ is not empty. The second and the third constraints guarantee the stability of the queues in the network. In order to solve this optimization problem, we define a new
$M(M+1)$-dimensional vector $ \boldsymbol{\beta} {=}[\beta_{11},..., \beta_{1M}, ...,
\beta_{M1},...,\beta_{MM}, \lambda_s^1,..., \lambda_s^M]^T$ and
rewrite the optimization problem, in the standard form
\cite{boyd2004convex}, as follows
\begin{equation}
\label{opt1_modified_SBC}
\begin{aligned}
& \underset{\boldsymbol{\beta}}{\text{min}}
&&   \boldsymbol{x}^T \boldsymbol{\beta} \\
& \text{s.t.}
&&  0 \leq  \boldsymbol{b}_i^T \boldsymbol{\beta}  \leq 1 ,  & \enspace i \in \{1,2,...,M\} \\
&&& \boldsymbol{v}_i^T \boldsymbol{\beta}+u_i \leq 0 ,  & \enspace i \in \{1,2,...,M\}  \\
&&&  \boldsymbol{\beta}^T \boldsymbol{A}_i \boldsymbol{\beta} + \boldsymbol{c}_i^T \boldsymbol{\beta}+d_i \leq 0 ,  & \enspace i \in \{1,2,...,M\} \\
&&&  \boldsymbol{0} \preceq \boldsymbol{\beta} \preceq
\boldsymbol{1}
\end{aligned}
\end{equation}
where $x{=} [0,0,...,0,-x_1,...,-x_M]$, and the terms
$\boldsymbol{v}_i$, $u_i$, $\boldsymbol{A}_i$, $\boldsymbol{c}_i,$
and $d_i$ can be easily obtained from (\ref{muei_enhance}) and
(\ref{muui_enhance}). From (\ref{muei_enhance}), the stability of
$Q_{s_i}$ is represented by linear constraints in
$\boldsymbol{\beta}$ while, from (\ref{muui_enhance}), the
stability of $Q_{r_i}$ is represented by quadratic constraints in
$\boldsymbol{\beta}$. The objective and the linear constraints are
convex. However, the quadratic constraints are not convex because
$\boldsymbol{A}_i$ is an indefinite matrix. In general, non-convex
QCQP problems are NP hard \cite{vandenberghe1996semidefinite},
except for special cases such as those in \cite{huang2010rank}.
Several methods have been proposed to approximate non-convex QCQP
problems, including semi-definite relaxation (SDR)
\cite{vandenberghe1996semidefinite}, the reformulation
linearization technique (RLT) \cite{sherali1998reformulation}, and
successive convex approximation (SCA) \cite{beck2010sequential}.
In our case, we use an iterative algorithm to obtain a good
feasible solution as in \cite{mehanna2014feasible}. We approximate
the feasible region through a linear restriction of the non-convex
parts of the constraints. The solution of the resulting
optimization problem is then used to compute a new linearization
and the procedure is repeated until convergence. Using the
eigenvalue decomposition, the matrix $\boldsymbol{A}_i$ can be
expressed as
$\boldsymbol{A}_i=\boldsymbol{A}_i^++\boldsymbol{A}_i^-$, where
$\boldsymbol{A}_i^+ \succeq 0$ and $\boldsymbol{A}_i^-\preceq 0$.
For any $\boldsymbol{y} \in R^{M(M+1)\text{x}1}$, we can replace
the non-convex constraint in (\ref{opt1_modified_SBC}) by the
following convex one
\begin{equation}
\boldsymbol{\beta}^T \boldsymbol{A}_i^+ \boldsymbol{\beta} +2
\boldsymbol{y}^T \boldsymbol{A}_i^- \boldsymbol{\beta} +
\boldsymbol{c}_i^T \boldsymbol{\beta}+d_i \leq \boldsymbol{y}^T
\boldsymbol{A}_i^-\boldsymbol{y}
\end{equation}
See \cite{mehanna2014feasible} for more details. Thus, the non-convex problem is converted to a convex one, and we use Algorithm 1 to solve the optimization problem in (\ref{opt1}).
\begin{algorithm}[h]
\label{lago}
\caption{We use the FPP-SCA algorithm to achieve good feasible point}
\textbf{Initialization}: set $k=0$ and $ \boldsymbol{y}_0= \boldsymbol{0}$. \\
\textbf{Repeat} \\
 \begin{enumerate}
 \item solve
 \begin{equation*}
 \label{opt1_final}
 \begin{aligned}
 & \underset{\boldsymbol{\beta}}{\text{min}}
 &&   \boldsymbol{x}^T \boldsymbol{\beta} \\
 & \text{s. t.}
&&  0 \leq  \boldsymbol{b}_i^T \boldsymbol{\beta}  \leq 1 \\
&&& \boldsymbol{v}_i^T \boldsymbol{\beta}+u_i \leq 0  \\
&&&  \boldsymbol{\beta}^T \boldsymbol{A}_i^+ \boldsymbol{\beta} +2 \boldsymbol{y}_k^T \boldsymbol{A}_i^-\boldsymbol{\beta} + \boldsymbol{c}_i^T \boldsymbol{\beta}+d_i \leq \boldsymbol{y}_k^T \boldsymbol{A}_i^-\boldsymbol{y}_k \\
&&&  0 \leq \boldsymbol{\beta} \leq 1,  \enspace i \in \{1,2,...,M\}
 \end{aligned}
 \end{equation*}
 \item Let $\boldsymbol{\beta}_k^*$ denote the optimal $\boldsymbol{\beta}$ obtained at the $k$-\textit{th} iteration, and set  $\boldsymbol{y}_{k+1}=\boldsymbol{\beta}_k^*$
 \item Set $k=k+1$. \\
 \end{enumerate}
 \textbf{until convergence}.
\end{algorithm}

 It is worth noting that we can use other techniques to solve this problem, but the FFP-SCA obtains good feasible point even for very large $M$. Moreover, a few iterations are required for convergence \cite{mehanna2014feasible}.

\subsection{Average Delay Characterization}

In this subsection, we characterize the average delay for the
dominant system of the SBC scheme. We derive an approximate
expression of the delay as in \cite{sadek2007cognitive}, where we
assume that the relay queues are discrete-time M/M/1 queues. If the packet is directly transmitted
from the source terminal to the destination, it experiences a
queueing delay only in the terminal's queue. Alternatively, if the
packet is delivered to the destination through the relay, this
packet experiences two queuing delay; one in the terminal's queue
and the other in the relay's queue. The packet experiences
\textit{only} the queueing delay at the source terminal with the
following probability
\begin{equation}
\epsilon_i = w_i\frac{\Big(1{-}\sum\limits_{j=1}^{M}\beta_{ij} \Big) f_{s_id} + \sum\limits_{j=1}^{M}\beta_{ij} g_{s_id}^{r}}{\mu_i}
\end{equation}
which is the probability that the packet is successfully decoded by the destination given
that it is dropped from the source terminal. Thus, the average delay encountered by the packets of the $i$-\textit{th} source terminal is given by
\begin{align}
\label{delayi_enhance}
D_i &= \epsilon_i T_{s_i}+ (1-\epsilon_i) (T_{s_i}+T_{r_i})  \nonumber \\
&= T_{s_i}+(1-\epsilon_i)T_{r_i}  \enspace \enspace
\end{align}
where $T_{s_i}$ and $T_{r_i}$ denote the average queueing delays at $s_i$
and $r_i$, respectively. Since the arrival rates at $Q_{s_i}$ and $Q_{r_i}$ are
given by $\lambda_i$ and $\lambda_{r_i}$, respectively, then applying Little's law
yields
\begin{equation}
\label{ti_enhance}
        T_{s_i}= N_i / \lambda_i       ,   \enspace        T_{r_i} = N_{r_i} / \lambda_{r_i}
\end{equation}
where $N_i$ and $N_{r_i}$ denote the average queue size of $Q_{s_i}$ and $Q_{r_i}$, respectively. The queues of the source terminals are discrete-time M/M/1 queues with
Bernoulli arrivals and Geometrically distributed service rates, and we assume that $Q_{r_i}$ is a discrete-time M/M/1 queue. Thus, we can easily calculate $N_{s_i}$ and $N_{r_i}$, by applying the Pollaczek-Khinchine formula \cite{kleinrock1975queueing}, as follows

\begin{equation}
        N_i= \dfrac{-\lambda_i^2 + \lambda_i}{\mu_i-\lambda_i} , \enspace \enspace N_{r_i}= \dfrac{-\lambda_{r_i}^2 + \lambda_{r_i}}{\mu_{r_i}-\lambda_{r_i}} \enspace
        \label{ni_enhance}
\end{equation}

Substituting (\ref{ti_enhance}) and (\ref{ni_enhance}) in
(\ref{delayi_enhance}), we can write the average queueing delay
for the $i$-th source terminal as
\begin{equation}
\label{average_delay_enhance}
D_i = \frac{1-\lambda_i}{\mu_i-\lambda_i}+ (1-\epsilon_i) \frac{1-\lambda_{r_i}}{\mu_{r_i}-\lambda_{r_i}}
\end{equation}
where the expressions of $\mu_i$ and $\lambda_{r_i}$, and $\mu_{r_i}$ are obtained in (\ref{muei_enhance}), (\ref{lambdari_enhance}), and (\ref{muri_enhance}), respectively.

\section{Decision-Based Cooperative Scheme}
\label{DBC} In this section, we present the proposed DBC scheme
together with its stability and delay analysis. In this scheme,
the behavior of the source terminals is exactly the same as in the
SBC scheme, where each source terminal transmits only in its
assigned time slots. The source drops the packet if it hears an
ACK from the relay or the destination, otherwise, the source
retransmits the packet in the next assigned time slot. The main
difference between the DBC and the SBC schemes is in the way the
relay operates. In the SBC scheme, the relay decides its operation
policy depending on the sensing information. However, in the DBC
scheme, the relay does not sense the channel and it decides
\textit{randomly} either to transmit or to listen regardless the
queue state of $s_i$. To illustrate the operation policy of the
relay in the DBC scheme, let us consider the time slots allocated
to $s_i$ where the relay takes one of the following actions:
\begin{itemize}
\item The relay transmits a packet from $Q_{r_j}$ with probability
$\alpha_{ij}$, where $j \in \{1,2,..,M\}$ \item The relay listens
to the source transmission with probability
$1-\sum\limits_{j=1}^{M} \alpha_{ij} $. The relay stores the
source's packet if the relay successfully decodes it while the
destination fails.
\end{itemize}

When the relay decides to transmit without listening to the
channel, the probability of successful transmission depends on the
queue state of the source terminal because if the terminal's queue
is empty, there is a higher probability that the destination
decodes the relay's packet. Hence, it is clear that there is an
interaction between the relay queues and those of the source
terminals. To simplify the analysis, we assume the same dominant
system used in the SBC scheme, where $Q_{r_i}$ transmits a dummy
packet whenever the relay, randomly, decides to transmit a packet
from $Q_{r_i}$ while this queue is empty. Furthermore, to perform
the stability analysis of a certain source terminal, we assume
that all other source terminals are saturated, i.e., their queues
are not empty at any time slot. This dominant system simplifies
the stability analysis and provides an inner bound on the
stability region.

\subsection{Stability Region Analysis}

In this subsection, we follow the same steps as in the SBC scheme.
First, we characterize the stability conditions for all the queues
in the DBC scheme. Then, we formulate an optimization problem to
maximize the weighted aggregate throughput and characterize the
stability region by optimizing the values of $
\{\alpha_{ij}\}_{i,j{=}1}^M $ under the stability constraints.

In the dominant system, a packet departs $Q_{s_i}$, where $i \in \{1,2,.....,M \}$, in two cases. First, if it is successfully decoded by at least one node, i.e., the destination or the relay when the relay decides to listen to $s_i$. Second, if it is successfully decoded by the destination when the relay decides to interfere with $s_i$. Thus, the average service rate of $Q_{s_i}$ for certain $w_i$ is given by
\begin{equation}
\label{muei_DBC}
\mu_{i}= w_i \Big( (1-\sum\limits_{j=1}^{M} \alpha_{ij})(f_{s_id}+(1-f_{s_id}) f_{s_ir} ) + \sum\limits_{j=1}^{M} \alpha_{ij} g_{s_id}^r \Big)
\end{equation}
For stability of $Q_{s_i}$, the following condition must be satisfied
\begin{equation}
\label{lambdai_DBC}
\lambda_{i} <  w_i \Big( (1-\sum\limits_{j=1}^{M} \alpha_{ij})(f_{s_id}+(1-f_{s_id}) f_{s_ir} ) + \sum\limits_{j=1}^{M} \alpha_{ij} g_{s_id}^r \Big)
\end{equation}

A packet arrives at $Q_{r_i}$ if the following conditions are met. First, an outage occurs in the link
between $s_i $ and the destination node while no outage occurs in the link
between  $s_i$ and the relay. Second, $Q_{s_i}$ is not empty which has
a probability of $\lambda_i/\mu_i$, and the relay decides to listen to $s_i$ which happens with probability $1-\sum\limits_{j=1}^{M} \alpha_{ij}$. Thus, the average arrival rate
of $Q_{r_i}$ is given by
\begin{equation}
\label{lambdari_DBC}
\lambda_{r_i}= w_i (1-f_{s_id}) f_{s_ir}  \frac{\lambda_{i}}{\mu_{i}} (1-\sum\limits_{j=1}^{M} \alpha_{ij})
\end{equation}

In the dominant system, a packet departs from $Q_{r_i}$ in two cases. First, if the relay randomly decides to transmits a packet from $Q_{r_i}$ on the time slot allocated to $s_i$. In this case the packet departs $Q_{r_i}$ with probability $ w_i \big(\alpha_{ii} f_{rd} (1- \frac{\lambda_i}{\mu_i}) {+} \alpha_{ii} g_{rd}^{s_i} \frac{\lambda_i}{\mu_i} \big)$. Second, if the relay decides to transmits a packet from $Q_{r_i}$ in the time slot allocated to the saturated source terminal $s_j$, where $j \neq i$. In this case the packet departs $Q_{r_i}$ with probability $ \sum\limits_{j{=}1,j\neq i}^{M} w_j \alpha_{ji} g_{rd}^{s_j}$. Thus, the average service rate of $Q_{r_i}$ can expressed by
\begin{equation}
\label{muri_DBC}
\mu_{r_i}{=}  w_i \Big(\alpha_{ii} f_{rd} (1- \frac{\lambda_i}{\mu_i}) {+} \alpha_{ii} g_{rd}^{s_i} \frac{\lambda_i}{\mu_i} \Big){+}\sum\limits_{j{=}1,j\neq i}^{M} w_j \alpha_{ji} g_{rd}^{s_j}
\end{equation}
For the stability of $Q_{r_i}$, the service rate must be higher than the arrival rate, i.e., $\lambda_{r_i}< \mu_{r_i}$ , and hence, we have
\begin{equation}
\label{lambdaii_DBC}
\lambda_i< \frac{ w_i \alpha_{ii} f_{rd} {+} \sum\limits_{j{=}1,j\neq i}^{M} w_j \alpha_{ji} g_{rd}^{s_j}}{w_i \Big( (1{-}\sum\limits_{j=1}^{M}\alpha_{ij})f_{s_ir} (1-f_{s_id})+\alpha_{ii} f_{rd} -\alpha_{ii} g_{rd}^{s_i}         \Big)} \mu_i
\end{equation}

To guarantee the stability of the network the following condition must be satisfied
\begin{equation}
\label{max_stable_DBC}
    \lambda_s^i < \min\{\mu_i,\mu_{u_i}\}
\end{equation}
where $\mu_{u_i}$ obtained directly from (\ref{lambdaii_DBC}) as follows
\begin{equation}
\label{mueui_DBC}
\mu_{u_i}= \frac{ w_i \alpha_{ii} f_{rd} {+} \sum\limits_{j{=}1,j\neq i}^{M} w_j \alpha_{ji} g_{rd}^{s_j}}{w_i \Big( (1{-}\sum\limits_{j=1}^{M}\alpha_{ij})f_{s_ir} (1-f_{s_id})+\alpha_{ii} f_{rd} -\alpha_{ii} g_{rd}^{s_i}         \Big)} \mu_i
\end{equation}

We can write the optimization problem to calculate the maximum
weighted stable throughput of the network as
\begin{equation}
\label{opt2_DBC}
\begin{aligned}
& \underset{ \enspace \enspace \{\alpha_{ij}\}_{i,j{=}1}^M  }{\text{max}}
&&  \sum\limits_{i=1}^{M} x_i \lambda_s^i \\
& \text{subject to}
&&  0 \leq  \sum\limits_{j=1}^{M} \alpha_{ij} \leq 1,  & \enspace i \in \{1,2,......,M\} \\
&&& \lambda_s^i  \leq \mu_i  , & \enspace i \in \{1,2,......,M\}  \\
&&& \lambda_s^i  \leq \mu_{u_i} , & \enspace i \in \{1,2,......,M\}   \\
&&& 0 \leq  \alpha_{ij}  \leq 1, &  \enspace i,j \in \{1,2,......,M\} \\
\end{aligned}
\end{equation}
This optimization problem is exactly the same problem as that in the SBC scheme. Thus, we follow the same steps to solve this problem, and use the FPP-SCA algorithm to obtain good feasible solution. 



\subsection{Average Delay Characterization}
In the DBC scheme, the source packets are either transmitted
directly to the destination or through the relay. We define $\tau_i$, which is the probability that the packet is
successfully decoded by the destination given that it is dropped
from the $s_i$, as
\begin{equation}
\tau_i= w_i\frac{\Big(1{-}\sum\limits_{j=1}^{M}\alpha_{ij} \Big) f_{s_id} + \sum\limits_{j=1}^{M}\alpha_{ij} g_{s_id}^{r}}{\mu_i}
\end{equation}
Thus, the packets of the $i$-\textit{th} source terminal
experience the following average delay
\begin{equation}
D_i =  T_{s_i}+(1-\tau_i)T_{r_i}
\end{equation}
As in the SBC scheme, we can calculate $T_{s_i}$ and $T_{r_i}$ by
assuming that $Q_{r_i}$ is a discrete-time M/M/1 queue and
applying Pollaczek-Khinchine formula. Hence, we can write the
average queueing delay for the $i$-th source terminal's packets as
\begin{equation}
\label{average_delay_DBC}
D_i = \frac{1-\lambda_i}{\mu_i-\lambda_i}+ (1-\tau_i) \frac{1-\lambda_{r_i}}{\mu_{r_i}-\lambda_{r_i}}
\end{equation}
where the expressions of $\mu_i$, $\lambda_{r_i}$, and $\mu_{r_i}$ are obtained in (\ref{muei_DBC}), (\ref{lambdari_DBC}), and (\ref{muri_DBC}), respectively.

\section{Numerical Results}
\label{numerical}
In this section, we evaluate the performance of the two proposed schemes for the case of two source terminals, i.e., $M{=}2$. We compare the performance of the SBC and DBC schemes with other cooperative schemes. First, we compare with the cooperative cognitive multiple-access (CCMA) scheme, proposed in \cite{sadek2007cognitive}, where the authors assume that the relay only transmits in the idle time slots. 
Moreover, the relay only helps the terminals which, on average,
have worse channel condition than the relay itself. In other
words, the relay assists the terminals whose outage probability to
the destination satisfy $f_{rd} > f_{s_id}$, where $i \in
\{1,2,...,M\}$. Second, we compare with the DBC scheme while
setting $g_{ij}^I{=}0$, i.e., the destination can decode only if
there is one node is transmitting per time slot. We refer to this
scheme by the collision model of the DBC scheme (CM-DBC). This
scheme is important to illustrate how the MPR capability at the
destination is essential in the absence of sensing at the relay.
Third, we compare with the SBC and DBC schemes but with
$\beta_{12}{=}\beta_{21}{=}0$ and $\alpha_{12}{=}\alpha_{21}{=}0$,
respectively, to illustrate the effect of these probabilities and
demonstrate the cases where their effect on the performance is essential.
By setting $\beta_{12}{=}\beta_{21}{=}0$ and $\alpha_{12}{=}\alpha_{21}{=}0$,
the relay can not transmit a relayed packet from $Q_{r_i}$ by interfering 
on $s_j$, where $i\neq j$.

We consider three cases for the channel conditions. In the first
case, the system parameters are chosen as follows: $ P{=}10,
R{=}1$, $\rho_{s_1,d}^{2}{=}0.02$, $\rho_{s_2,d}^{2}{=}0.84$,
$\rho_{s_1,r}^{2}{=}0.97$, $\rho_{s_2,r}^{2}{=}0.93$, and
$\rho_{r,d}^{2}{=}0.03$. This case corresponds to an asymmetric
channel situation, where the $r$-$d$ channel and the $s_1$-$d$
channel are weak while $s_2$-$d$ channel is strong. In Fig.
\ref{Unsymmetric_Bad_relay}, we plot the stability region for the
cooperative schemes via varying the value of $w_1$ from zero to
one by step $0.1$. It is obvious from the figure that the SBC
scheme significantly outperforms the CCMA scheme. The rationale
behind this enhancement is that in the CCMA scheme the relay is
restricted to transmit its packets in the idle time slots only. On
the other hand, the SBC scheme adds to the relay the capability
to, simultaneously, transmit its packets with the source terminal
while controlling the interference probabilistically. This
capability expands the stability region.

It is worth noting that the two proposed schemes, SBC and DBC, can
achieve exactly the same maximum stable throughput for $s_2$,
while this is not the case for $s_1$. In the DBC scheme, the relay
does not sense the channel, and hence, the relay may interfere with
the source terminals. If the relay transmits a packet without
sensing and interferes with $s_2$, the destination with high probability
decodes the packet of $s_2$ first by treating the relay's signal
as noise, since the $s_2$-$d$ channel is strong, then decodes the
relayed signal. This is not the case for $s_1$ because if the
relay interferes with $s_1$ in the presence of these weak channels,
$s_1$-$d$ and $r$-$d$, the destination most probably fails to
decode both signals. The huge performance gap between the DBC
scheme and the CM-DBC scheme demonstrates the importance of MPR
capability at the destination in the absence of sensing capability
at the relay. Removing the MPR capability causes a catastrophic
reduction in the performance.

\begin{figure}[t]
  \centering
\includegraphics[width=.95\linewidth,height=.25\textheight]{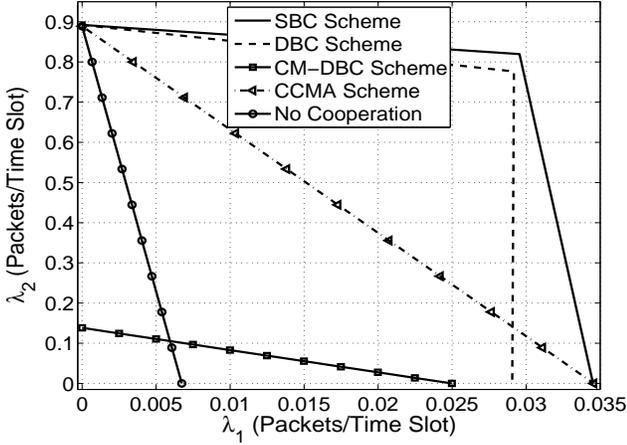}
\caption{ Stable throughput region for asymmetric channels configuration with weak relay-destination channel. }
\label{Unsymmetric_Bad_relay}
\end{figure}

In Fig. \ref{Aggregate_Assymatric_Bad_relay}, we use the same
system parameters as in the former figure. We plot the maximum
overall stable throughput versus $w_1$, which is the fraction of
time allocated for $s_1$. This figure shows that as $w_1$
increases the maximum aggregate stable throughput of the network
decreases. The reason behind this is that the maximum stable
throughput for $s_1$ is equal to $0.035$ while that for $s_2$ is
$0.9$. Consequently, as we allocate more time slots for $s_2$, the
aggregate stable throughput of the network increases. Moreover, as
shown in Fig. \ref{Unsymmetric_Bad_relay}, the two proposed
schemes, SBC and DBC, can achieve the same maximum stable
throughput for $s_2$ which dominates the aggregate throughput for
any $w_1$, and this is the reason why the aggregate throughput for
the proposed schemes is close in Fig.
\ref{Aggregate_Assymatric_Bad_relay}.

\begin{figure}[t]
  \centering
\includegraphics[width=.95\linewidth,height=.25\textheight]{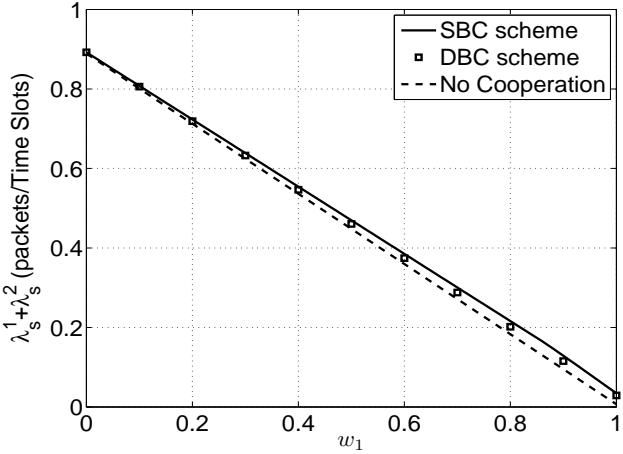}
\caption{ The maximum aggregate stable throughput for two source terminals in asymmetric channels configuration and weak relay-destination channel. }
\label{Aggregate_Assymatric_Bad_relay}
\end{figure}

In the second case, Fig. \ref{Unsymmetric_good_relay}, we plot the
stability region for different cooperative schemes for the
following system parameters: $ P{=}10, R{=}1$,
$\rho_{s_1,d}^{2}{=}0.8$, $\rho_{s_2,d}^{2}{=}0.08$,
$\rho_{s_1,r}^{2}{=}0.85$, $\rho_{s_2,r}^{2}{=}0.9$, and
$\rho_{r,d}^{2}{=}0.97$. This case corresponds to an asymmetric
channel situation, where the channels $s_1$-$d$ and $r$-$d$
are strong while the channel $s_2$-$d$ is weak. It is important to
note that the $r$-$d$ channel is stronger than that in the
former case. In this case both proposed cooperative schemes
achieve the same stability region, and hence, sensing does not
increase the stability region of the system. In the DBC scheme,
the relay transmits according a random experiment. If the relay
decides, randomly, to interfere with $s_1$, the destination can
decode both signals because both channels, $s_1$-$d$ and $r$-$d$,
are strong. Alternatively, if the relay interferes with $s_2$, the
destination most probably decodes the relay signal first, since
$r$-$d$ channel is strong, then decodes the source signal.
Consequently, the MPR capability at the destination decreases the
need to detect idle time slots, and hence, both proposed schemes
can achieve the same stability region. Removing the MPR capability
at the destination, in absence of sensing, yields to the
performance of the CM-DBC scheme which is much worse than that of
the DBC scheme.

It is important to notice that both proposed schemes outperform
the CCMA scheme which restricts the relay to send only in the idle
time slots. Another insight from this figure is the importance of
$\beta_{12}$ and $\beta_{21}$ in the SBC scheme and $\alpha_{12}$
and $\alpha_{21}$ in the DBC scheme to achieve this stability
region. These probabilities have crucial effect in the asymmetric
channel case as they enable the relay to utilize the time slots of
one source to transmit the traffic of the other. Note that most of
the packets of $s_1$ are transmitted directly to the destination
due to the high gain of its direct channel. In contrast, $s_2$
suffers from low direct channel gain, so most of its packets are
relayed. In the proposed schemes, the relay has the capability to
interfere with $s_1$, which has high direct channel gain, and send
a relayed packet of $s_2$, who suffers from low direct channel
gain. This capability expands the stability region for both
schemes.

\begin{figure}[t]
  \centering
\includegraphics[width=.95\linewidth,height=.25\textheight]{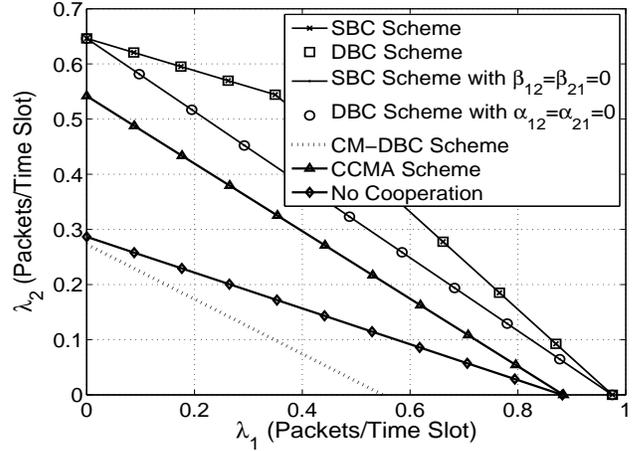}
\caption{ Stable throughput region for asymmetric channels configuration with strong relay-destination channel. }
\label{Unsymmetric_good_relay}
\end{figure}

In Fig. \ref{Aggregate_Assymatric_good_relay}, we plot the
aggregate stable throughput of the two source network versus
$w_1$. We use the same system parameters as in Fig.
\ref{Unsymmetric_good_relay}. It is obvious that the proposed
schemes achieve higher aggregate throughput than that of the
CCMA scheme. Unlike Fig. \ref{Aggregate_Assymatric_Bad_relay}, the
aggregate stable throughput increases as we allocate more time slots for
$s_1$. In this case, the maximum stable throughput of $s_1$,
$\lambda_s^1{=}0.98$, is greater than that of $s_2$,
$\lambda_s^2{=}0.65$. Consequently, as $w_1$ increases the
aggregate throughput of the network increases.

\begin{figure}[t]
  \centering
\includegraphics[width=.95\linewidth,height=.25\textheight]{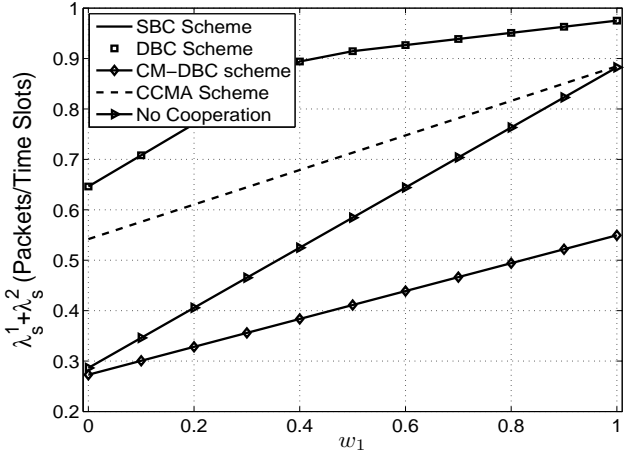}
\caption{ The maximum aggregate stable throughput for two source terminals in asymmetric channels configuration and strong relay-destination channel. }
\label{Aggregate_Assymatric_good_relay}
\end{figure}

In the third case, Fig. \ref{Nealy_symmatric}, we plot the
stability region\footnote{We do not add the DBC scheme in the
figure, because it is easy to realize from Fig.
\ref{Unsymmetric_good_relay} that the performance of the DBC scheme in
this case is exactly the same as that of the SBC scheme.} for a roughly
symmetric channel configuration where both direct links, $s_1$-$d$
and $s_2$-$d$, are strong. The system parameters are chosen as
follows: $\rho_{s_1,d}^{2}{=}0.75$, $\rho_{s_2,d}^{2}{=}0.8$,
$\rho_{s_1,r}^{2}{=}0.63$, $\rho_{s_2,r}^{2}{=}0.73$, and
$\rho_{r,d}^{2}{=}0.85$. The SBC scheme still exceeds the CCMA
scheme, however, in this case the role of of $\beta_{12}$ and
$\beta_{21}$ diminishes due to the symmetric configuration. In
this case, both source terminals have the same channel condition
so the relay does not need to assist one source terminal at the
expense of the other. Another insight from this figure is that,
when the direct link channels are strong, we can achieve the
performance of the CCMA scheme without even cooperation. However,
the proposed schemes expand the stability region 
even when both direct links are strong.

\begin{figure}[t]
  \centering
\includegraphics[width=.95\linewidth,height=.25\textheight]{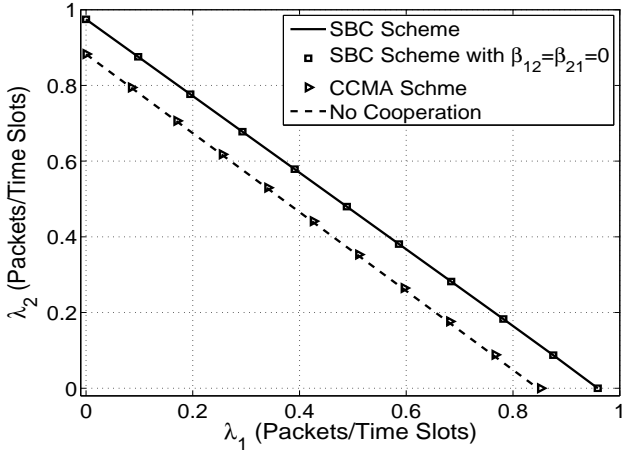}
\caption{ Stable throughput region for nearly symmetric channels configuration. }
\label{Nealy_symmatric}
\end{figure}

In Fig. \ref{spectral}, we illustrate how
the proposed schemes respond when the channel capability to
tolerate interference changes. We depict
the effect of the transmission rate $R$ on the maximum  aggregate
stable throughput of the network for symmetric channel
configuration. The parameters are chosen as follows: $w{=}0.5$,
$\rho_{s_1,d}^{2}{=}0.8$, $\rho_{s_2,d}^{2}{=}0.8$,
$\rho_{s_1,r}^{2}{=}0.95$, $\rho_{s_2,r}^{2}{=}0.95$, and
$\rho_{r,d}^{2}{=}0.96$. It is clear that the stable throughput of
the SBC and DBC schemes decrease slower than the CCMA and CM-DBC
schemes. At low transmission rates, the channels can tolerate the
interference, consequently, detecting the idle time slots is not
important. It is obvious from the figure that the SBC and DBC
schemes have the same performance for low transmission rates. As
the transmission rate increases, the capability to sustain
interference for all wireless channels decreases, and hence, it
becomes essential for the relay to transmit only in the idle time
slots. In the figure, as the transmission rate increases the
performance of the two proposed schemes approaches to that of the
CCMA scheme because the values of $\{ \beta_{ij} \}_{i,j{=}1}^2 $
and $\{ \alpha_{ij} \}_{i,j{=}1}^2$ begin to decrease to limit the
negative effect of interference. As we increase the transmission
rate more, the channels can not sustain any interference. The
values of $\{ \beta_{ij} \}_{i,j{=}1}^2 $ in the SBC scheme
diminish to be almost zeros, and the relay is restricted to send
only in the idle time slots. This means that the SBC scheme boils
down to the CCMA scheme. On the other hand, for the DBC scheme, as
the transmission rate increases, the ability that the destination
decodes the transmission of two nodes simultaneously decreases,
and  hence, the DBC scheme turns to the CM-DBC scheme.

\begin{figure}[t]
  \centering
\includegraphics[width=.95\linewidth,height=.25\textheight]{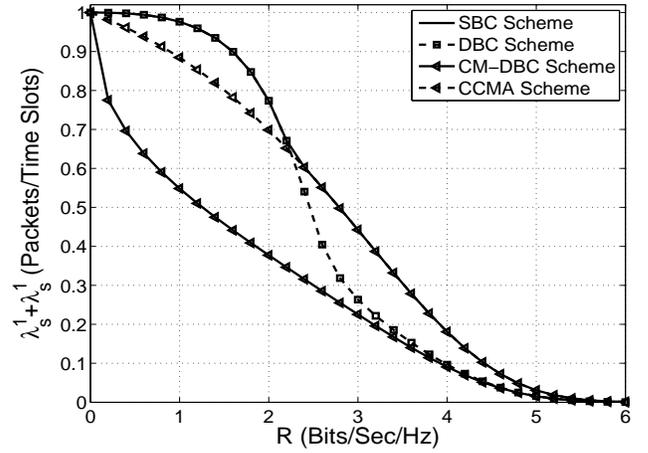}
\caption{ The maximum aggregate stable throughput versus the
transmission rate for symmetric
channels configuration. } \label{spectral}
\end{figure}

In Figs. \ref{Delay1} and \ref{Delay_unsymmatric}, we illustrate
the delay performance of the proposed schemes. First, in Fig.
\ref{Delay1}, we plot the minimum average delay encountered by the
packets of $s_2$, under the constraint that
$\lambda_s^1{=}0.29$. Moreover, we use the same system parameters
as in Fig. \ref{Unsymmetric_Bad_relay}, where the two proposed schemes do not achieve the same stability region. The maximum stable
throughputs for $s_2$, when $\lambda_s^1{=}0.29$, in the SBC and
DBC schemes are $0.83$ and $0.77$, respectively. We
do not plot the CCMA scheme to have a clear comparison
between the plotted schemes, since the CCMA scheme
performance is way worse than both. It is clear from
the figure that the delay performance of the two proposed schemes
in this case is close to each other for low $\lambda_2$. However,
the SBC scheme delay performance slightly exceeds that of the DBC
scheme. As $\lambda_2$ increases the SBC scheme begins to
significantly outperform the DBC scheme. It is obvious that the
results obtained through queue simulation are very close to the
results of the closed-form expressions derived in
(\ref{average_delay_enhance}) and (\ref{average_delay_DBC}). This
validates the soundness of the mathematical model. Moreover, the
trade-off between the stable throughput and the average delay is
clear where as the throughput increases the delay increases.

\begin{figure}[t]
  \centering
\includegraphics[width=.95\linewidth,height=.25\textheight]{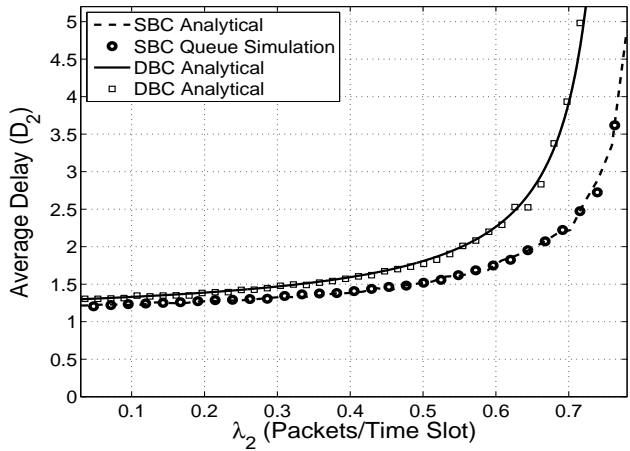}
\caption{ The average delay encountered by the packets of $s_2$ for asymmetric channels configuration with weak relay-destination channel.  } \label{Delay1}
\end{figure}

In Fig. \ref{Delay_unsymmatric}, we plot the minimum delay for
$s_1$ when $\lambda_s^2{=}0.81$. We use the same system
parameters as those in Fig. \ref{Unsymmetric_good_relay}, where the two proposed schemes achieve the same stability region. In this case, the maximum stable throughput for $s_2$ in the SBC, DBC and
CCMA schemes are $0.13$, $0.13$, and $0.038$, respectively. The
figure depicts that the proposed schemes significantly outperform the
CCMA scheme. The rationale behind this is that the CCMA scheme
allocates a large fraction of the time slots for $s_2$ to satisfy
the constraint $\lambda_s^2{=}0.81$, besides, the relay transmits
only in the idle time slots. Consequently, the delay performance
of the CCMA is much worse than that of the two proposed schemes.
It is worth noting that even when the two proposed schemes can
achieve the same stability region, the delay performance of the
SBC scheme outperforms that of the DBC scheme. This is because the
SBC scheme exploits the idle time slots to transmit the relayed
packets, and the capability of the channel to sustain the
interference to send simultaneously with the source terminal.
Consequently, the SBC scheme exceeds the DBC scheme that exploits
only the capability of the channels to tolerate interference.

\begin{figure}[t]
  \centering
\includegraphics[width=.95\linewidth,height=.25\textheight]{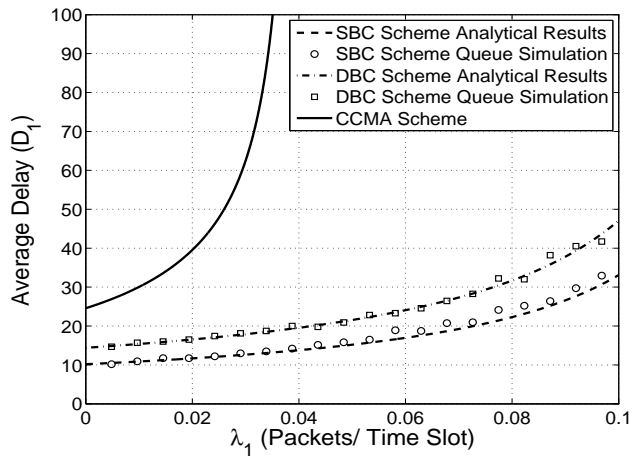}
\caption{ The average delay encountered by the packets of $s_1$ for asymmetric channels configuration with strong relay-destination channel. } \label{Delay_unsymmatric}
\end{figure}

\section{Sensing at the relay versus MPR at the destination}
\label{discussion}

In this section, we illustrate how the MPR capability at the
destination can compensate for the lack of sensing at the relay.
Moreover, we derive the condition under which the two proposed
schemes achieve exactly the same maximum stable throughput,
$\lambda_s^i$. In this part, we assume that $w_1{=}1$, and hence,
all time slots are allocated to $s_1$, i.e., we have only one
source terminal $s_1$.

\newtheorem{theorem}{Theorem}
\begin{theorem}
\label{th1}
The maximum stable throughput for the two proposed schemes is given by
\begin{equation}
\label{max_opt}
\lambda_{s}^{1^*}=(1{-}\frac{T_1}{T_1{+}g_{rd}^{s_1}})(f_{s_1d}{+}T_1){ +} \frac{T_1}{T_1{+}g_{rd}^{s_1}} g_{s_1d}^r
\end{equation}
if the following condition is satisfied
\begin{align}
\label{cond_opt}
f_{rd}-g_{rd}^{s_1}& \leq \min \bigg\{ \frac{g_{s_1d}^r(T_1+f_{rd})}{T_1+f_{s_1d}}, \\ \nonumber &\frac{(T_1+g_{rd}^{s_1})^2(f_{s_1d}+T_1)}{T_1(f_{s_1d}+T_1-g_{s_1d}^r)}{-}(T_1+2g_{rd}^{s_1})\bigg\}
\end{align}
where $T_1{=}(1{-}f_{s_1d})f_{s_1r}$.
\end{theorem}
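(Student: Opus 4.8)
The plan is to specialize everything to the single-source setting of the theorem ($M{=}1$, $w_1{=}1$), reduce each of the two optimizations to a one-dimensional maximization of $\min\{\mu_1,\mu_{u_1}\}$ over the single relay probability ($\beta_{11}$ in SBC, $\alpha_{11}$ in DBC), and show that in both schemes the maximizer is the \emph{same} interior point $p^*=T_1/(T_1+g_{rd}^{s_1})$, at which $\mu_1$ and $\mu_{u_1}$ coincide, the common value being exactly (\ref{max_opt}). I would abbreviate $a=f_{s_1d}+T_1$, $b=f_{s_1d}+T_1-g_{s_1d}^r$, $c=T_1+g_{rd}^{s_1}$, $e=T_1+f_{rd}$ and use the regularity facts $b>0$, $g_{rd}^{s_1}\le f_{rd}$, $\mu_1>0$ (all following from the outage expressions of Appendix~A) so that no denominator changes sign on $[0,1]$.

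For the SBC scheme, (\ref{muei_enhance}) and (\ref{muui_enhance}) specialize to $\mu_1(\beta)=a-b\beta$ and $\mu_{u_1}(\beta)=f_{rd}(a-b\beta)/(e-c\beta)$, so by (\ref{max_throughput_SBC}) we have $\lambda_s^{1^*}=\max_{\beta\in[0,1]}\min\{\mu_1(\beta),\mu_{u_1}(\beta)\}$. Here $\mu_1$ is strictly decreasing, and a short differentiation shows $\mu_{u_1}'$ has the sign of $ac-be=g_{s_1d}^r(T_1+f_{rd})-(f_{s_1d}+T_1)(f_{rd}-g_{rd}^{s_1})$; hence $ac-be\ge 0$ is precisely the first inequality of (\ref{cond_opt}). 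Under that inequality $\mu_{u_1}$ is strictly increasing and $\mu_1$ strictly decreasing on $[0,1]$, so $\min\{\mu_1,\mu_{u_1}\}$ is unimodal with maximum at the unique crossing; solving $e-c\beta=f_{rd}$ gives $\beta^*=T_1/c=p^*\in[0,1]$, and substituting $\beta^*$ into $\mu_1$ yields exactly (\ref{max_opt}).

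For the DBC scheme, (\ref{muei_DBC}) gives the same $\mu_1(\alpha)=a-b\alpha$, while (\ref{mueui_DBC}) gives $\mu_{u_1}(\alpha)=f_{rd}\,\alpha(a-b\alpha)/(T_1-c'\alpha)$ with $c'=c-f_{rd}$, and by (\ref{max_stable_DBC}) again $\lambda_s^{1^*}=\max_{\alpha\in[0,1]}\min\{\mu_1(\alpha),\mu_{u_1}(\alpha)\}$. The next step I would carry out is the identity $\mu_1(\alpha)-\mu_{u_1}(\alpha)=\mu_1(\alpha)\,(T_1-c\alpha)/(T_1-c'\alpha)$, obtained by factoring out $\mu_1(\alpha)$ and using $T_1-c'\alpha-f_{rd}\alpha=T_1-c\alpha$; since $\mu_1(\alpha)>0$ and $T_1-c'\alpha>0$ on $[0,1]$ (it is affine with endpoint values $T_1$ and $f_{rd}-g_{rd}^{s_1}$), the sign of $\mu_1-\mu_{u_1}$ equals that of $T_1-c\alpha$, so $\min\{\mu_1,\mu_{u_1}\}$ equals $\mu_{u_1}$ on $[0,\alpha^*]$ and $\mu_1$ on $[\alpha^*,1]$ with $\alpha^*=T_1/c=\beta^*$. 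On $[\alpha^*,1]$ the minimum is the decreasing $\mu_1$, so its max there is at $\alpha^*$; on $[0,\alpha^*]$ I would show $\mu_{u_1}$ is non-decreasing by noting $\mu_{u_1}'$ has the sign of $q(\alpha)=bc'\alpha^2-2bT_1\alpha+aT_1$, that $q(\alpha^*)\ge0$ is equivalent (via $2c-c'=c+f_{rd}$) to the second inequality of (\ref{cond_opt}), and that $q(0)=aT_1>0$ together with the vertex of $q$ lying at $T_1/c'>T_1/c=\alpha^*$ when $c'>0$ (or $q$ being concave/affine when $c'\le0$) forces $q\ge0$ on all of $[0,\alpha^*]$. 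Then $\mu_{u_1}$ peaks at $\alpha^*$, so $\lambda_s^{1^*}=\mu_1(\alpha^*)$, again (\ref{max_opt}); since $\beta^*=\alpha^*=p^*$, both schemes attain the same maximum stable throughput.

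The hard part will be the bookkeeping that ties the two inequalities in (\ref{cond_opt}) to exactly the right monotonicity statements: the first inequality is the condition making $\mu_{u_1}$ increasing in the SBC problem (so the crossing is the global maximizer), and the second makes $\mu_{u_1}$ increasing up to the crossing in the DBC problem. The subtlety on the DBC side is that $q(\alpha^*)\ge0$ alone only controls the derivative at the endpoint, so one must exploit the location of the vertex of the quadratic $q$ to upgrade this to non-negativity of $q$ on the whole interval $[0,\alpha^*]$; throughout, one keeps invoking the positivity facts from the outage model so that no denominator flips sign.
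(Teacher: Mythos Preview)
Your argument is correct and, for the SBC scheme, essentially coincides with the paper's Lemma~1 (Appendix~B): same monotonicity computation, same identification of the crossing $\beta_{11}^*=T_1/(T_1+g_{rd}^{s_1})$, same reading of the first inequality in (\ref{cond_opt}) as the sign condition on $\mu_{u_1}'$.

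For the DBC scheme your route diverges from the paper's Lemma~2 (Appendix~C) in a way that is actually cleaner. The paper asserts that $\mu_{u_1}^{\text{DBC}}$ is concave on $[0,1]$ (``quadratic over linear, hence concave'') and then argues that the crossing with the decreasing $\mu_1^{\text{DBC}}$ is the maximizer provided the slope of $\mu_{u_1}^{\text{DBC}}$ is still positive there; the second inequality in (\ref{cond_opt}) is precisely that slope condition at the crossing. Your approach bypasses the concavity claim entirely: the factorization $\mu_1-\mu_{u_1}=\mu_1\,(T_1-c\alpha)/(T_1-c'\alpha)$ pins down the crossing $\alpha^*=T_1/c$ directly, and then you control the sign of $\mu_{u_1}'$ on $[0,\alpha^*]$ through the quadratic $q(\alpha)=bc'\alpha^2-2bT_1\alpha+aT_1$, using $q(0)>0$, $q(\alpha^*)\ge 0$ (equivalent to the second inequality), and the vertex location $T_1/c'>\alpha^*$ when $c'>0$ (respectively concavity/affinity when $c'\le 0$). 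This is more robust, because the paper's blanket concavity assertion is not self-evident: a short computation gives $\mu_{u_1}''\propto ac'-bT_1=(f_{s_1d}+T_1)(g_{rd}^{s_1}-f_{rd})+g_{s_1d}^rT_1$, whose sign is not fixed by the model alone. Your vertex argument delivers exactly the monotonicity on $[0,\alpha^*]$ that is needed without relying on global concavity, so it closes the DBC half of the theorem on firmer ground while landing at the same optimizer $\alpha_{11}^*=\beta_{11}^*=T_1/(T_1+g_{rd}^{s_1})$ and the same value (\ref{max_opt}).
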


\begin{proof}[Proof of Theorem 1]
In the SBC scheme, we can rewrite the stability condition in (\ref{max_throughput_SBC}) for $w_1{=}1$ as
\begin{equation}
\lambda_{s}^1< \min\{ \mu_1^{\text{SBC}},\mu_{u_1}^{\text{SBC}} \}
\end{equation}
where\begin{align}
\mu_1^{\text{SBC}}&= (1-\beta_{11})(f_{s_1d}+T_1) + \beta_{11} g_{s_1d}^r \\
\mu_{u_1}^{\text{SBC}}&= \frac{f_{rd}}{(1-\beta_{11})T_1+f_{rd}-\beta_{11}g_{rd}^{s_1} } \mu_1^{\text{SBC}}
\end{align}
The optimization problem in (\ref{opt1}) can be written as
\begin{equation}
\label{opt_simple_SBC}
\begin{aligned}
& \underset{\beta_{11} }{\text{max}}
&&  \min\{\mu_1^{\text{SBC}},\mu_{u_1}^{\text{SBC}}\}\\
& \text{subject to}
&&  0 \leq  \beta_{11} \leq 1
\end{aligned}
\end{equation}
\newtheorem{lema1}{Lemma}

\begin{lema1}
If the following condition is satisfied
\begin{equation}
\label{cond1_lema1} f_{rd}-g_{rd}^{s_1} \leq
\frac{g_{s_1d}^r(T_1+f_{rd})}{T_1+f_{s_1d}},
\end{equation}
the solution of the optimization problem in (\ref{opt_simple_SBC}) is given by
\begin{equation}
\lambda_{s}^{1^*}=(1{-}\frac{T_1}{T_1{+}g_{rd}^{s_1}})(f_{s_1d}{+}T_1){ +} \frac{T_1}{T_1{+}g_{rd}^{s_1}} g_{s_1d}^r 
\end{equation}

\end{lema1}

\begin{proof} See Appendix B
\end{proof}

For the DBC scheme, we  can rewrite the the optimization problem
in (\ref{opt2_DBC}) as follows
\begin{equation}
\label{opt_simple_DBC}
\begin{aligned}
& \underset{\alpha_{11} }{\text{max}}
&&  \min\{\mu_1^{\text{DBC}},\mu_{u_1}^{\text{DBC}}\}\\
& \text{subject to} &&  0 \leq  \alpha_{11} \leq 1
\end{aligned}
\end{equation}
where
\begin{align}
\label{mu_DBC}
\mu_1^{\text{DBC}}    &= (1-\alpha_{11})(f_{s_1d}+T_1) {+} \alpha_{11} g_{s_1d}^r \\
\label{mu_u1_DBC} \mu_{u_1}^{\text{DBC}}&=
\frac{\alpha_{11}f_{rd}}{(1-\alpha_{11})T_1+\alpha_{11}f_{rd}-\alpha_{11}g_{rd}^{s_1}
} \mu_1^{\text{DBC}}
\end{align}

\begin{lema1}
If the following condition is satisfied
\begin{equation}
\label{cond2_lema2}f_{rd}-g_{rd}^{s_1}\leq \frac{(T_1+g_{rd}^{s_1})^2(f_{s_1d}+T_1)}{T_1(f_{s_1d}+T_1-g_{s_1d}^r)}{-}(T_1+2g_{rd}^{s_1})
\end{equation}
the maximum stable throughput for $s_1$ in the DBC scheme, which is the solution of the problem in (\ref{opt_simple_DBC}), is given by
\begin{equation}
\lambda_{s}^{1^*}=(1{-}\frac{T_1}{T_1{+}g_{rd}^{s_1}})(f_{s_1d}{+}T_1){ +} \frac{T_1}{T_1{+}g_{rd}^{s_1}} g_{s_1d}^r
\end{equation}
\begin{proof}[Proof]
See Appendix C
\end{proof}
\end{lema1} The conditions (\ref{cond1_lema1}) and (\ref{cond2_lema2}) establish the result in (\ref{max_opt}) and (\ref{cond_opt}).


\end{proof}

Theorem \ref{th1} states that if the values of $f_{rd}$ and $g_{rd}^{s_1}$ are close, i.e., they satisfy the condition in (\ref{cond_opt}), the two proposed schemes achieve exactly the same maximum throughput. The values $f_{rd}$ and $g_{rd}^{s_1}$ depend on the variance of two channels; $s_1$-$d$ and $r$-$d$. From the definitions in (\ref{f}) and (\ref{g}), the value of $g_{rd}^{s_1}$ is close to that of $f_{rd}$ in two cases. First, if the two channels, $s_1$-$d$ and $r$-$d$, are strong, i.e., the destination can decode both signals with high probability. Second, if at least one of the channels, $r$-$d$ or $s_1$-$d$, is strong, i.e., the destination can decode the strong signal first then the weak one.


We can easily map the insights from the obtained condition in (\ref{cond_opt}) to that in Fig. \ref{Unsymmetric_Bad_relay} and Fig. \ref{Unsymmetric_good_relay}. First, in Fig. \ref{Unsymmetric_Bad_relay}, the condition in (\ref{cond_opt}) is violated for $s_1$, where we have two weak channels, $s_1$-$d$ and $r$-$d$. Hence, the maximum stable throughput of $s_1$ in the SBC scheme 
is greater than that in the DBC scheme. Alternatively, in the same figure, the condition is satisfied for $s_2$, where we have one strong channel $s_2$-$d$. Thus, the maximum stable throughput of $s_2$ in the SBC scheme is exactly the same as that in the DBC scheme. Second, in Fig. \ref{Unsymmetric_good_relay}, the condition is satisfied for both users, hence, the maximum stable throughput for $s_1$ and $s_2$ is the same for the two proposed schemes. Moreover, the two schemes achieve the same stability region. From these results, we can realize that the MPR capability at the destination can compensate for the need for sensing to detect the idle time slots, if there is at least one strong channel to the destination. The strong channel facilitates the decoding at the destination, and mitigates the need of the relay to send only in empty channels.

\begin{figure}[t]
  \centering
\includegraphics[width=.95\linewidth,height=.25\textheight]{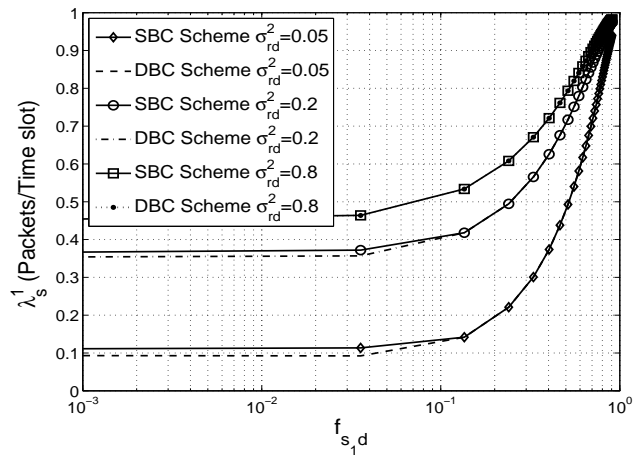}
\caption{ The effect of the channels, $s_1$-$d$ and $r$-$d$, on the maximum stable throughput of the two proposed schemes for $w_1{=}1$ for three fixed values for $r$-$d$ channel variance. }
\label{Singlr_User}
\end{figure}

In Fig. \ref{Singlr_User}, we show numerically different channel variances that satisfy the condition in (\ref{cond_opt}). We plot the maximum stable throughput of $s_1$ versus $f_{s_1d}$ by varying $\sigma_{s_1d}^2$ from zero to one for three fixed values $\sigma_{rd}^2$. The system parameters of this figure are chosen as follows: $w_1{=}1$, $ P{=}10$, $R{=}1$, and $\rho_{s_1,r}^{2}{=}0.8$. In the first case, where $\sigma_{rd}^2{=}0.05$, the two proposed schemes achieve the same stable throughput approximately at $f_{s_1d}{=}0.15$. As we increase $\sigma^2_{rd}$, the performance of the two proposed schemes becomes closer than that of the former case. Ultimately, for the case of strong $r$-$d$ channel $\sigma_{rd}^2{=}0.8$, the two proposed schemes almost achieve the same performance for any $\sigma_{s_1d}^2$, and this result emphasises the insights obtained from the condition in (\ref{cond_opt}).

\section{Conclusion}
\label{conclusion}

In this paper, we have proposed two cooperation schemes, and studied their impact at the medium access layer metrics such as stable throughput and average delay. In the SBC scheme, the relay senses the channel at the beginning of each time slot, and it decides either to transmit or receive packets depending on the sensing outcome. On the other hand, in the DBC scheme, the relay does not sense the channel, and it decides its operation in a random fashion. For each scheme, we derived the stability conditions for each queue in the system and characterized the stability region. Furthermore, we derived
approximate expression for the average delay encountered by the
packets. We illustrated how the SBC scheme significantly
outperforms existing cooperative schemes. The SBC scheme exploits the
available resources more efficiently than other cooperative
schemes because the relay not only utilizes the idle time slots, but also interferes with the source terminals in a mild way to
mitigate the adverse effects of interference. 

Moreover, we demonstrated that the MPR capability at the
destination can compensate for the relay need to sense the channel. 
Although the relay in the DBC scheme does not sense the channel,
our results show that the DBC scheme can achieve,
under a certain condition, the same stability region as that of the
SBC scheme. 

\section*{Appendix A}
\section*{Derivation of $g_{mn}^I$}

The term $g_{mn}^I$ denotes the probability that the link ($m,n$) is not in
outage in presence of interference from node I. Thus, we can express it as follows
\begin{equation}
\label{g}
g_{mn}^I=v_{mn}^I+(1-v_{mn}^I)h_{mn}^I
\end{equation}
where $v_{mn}^I$ is the probability that the node $n \in L$ successfully decodes both packets transmitted from $m \in T$ and $I \in T$, on the other hand, $h_{mn}^I $ is the probability that the node $n$ successfully decodes the packet transmitted from $m$ by treating $I$ as noise. Let X and Y be two independent exponential random variables with parameters $\gamma_1$ and $\gamma_2$, respectively, and $p(x)$ and $p(y)$ be their probability density functions. We define two deterministic variables $\eta$ and $\eta_1$. To derive the expression of $v_{mn}^I$, we first define the region
\begin{align}
\Re(\eta,\eta_1)&= \{(x,y): x>\eta \cap y>  \cap x{+}y>\eta_1\} \\ \nonumber
                &=\{(x,y) : x>\eta \cap y > \max[\eta,\eta_1-x]\} \\ \nonumber
                &= \{(x,y): \eta_1-\eta \geq x \geq \eta \cap y > \eta_1 -x \} \\ \nonumber
                &\cup \{(x,y): x> \eta_1-\eta \cap y>\eta \},
\end{align}
then, we figure the following integration
\begin{align}
V(\eta,\eta_1,\gamma_1,\gamma_2){=}& \iint\limits_{\Re(\eta,\eta_1)} p(x,y) dx dy \\ \nonumber
                               {=}& \int\limits_\eta^{\eta_1-\eta} p(x) \int\limits_{\eta-x}^\infty p(y) \enspace dy dx\\ \nonumber
 {+}& \int\limits_{\eta_1-\eta}^\infty p(x)\int\limits_\eta^\infty p(y) \enspace dy dx \\ \nonumber
 {=}&   \frac{\gamma_1e^{\gamma_2 \eta_1}}{\gamma_1{-}\gamma_2}  \resizebox{.58\hsize}{!}{$ (\exp\big({-}(\gamma_1{-}\gamma_2)\eta\big){-}\exp({-}(\gamma_1{-}\gamma_2)(\eta_1{-}\eta)))$} \\ \nonumber
 {+}& \exp(\eta_1 \gamma_1 +\eta(\gamma_2{-}\gamma_1))
\end{align}
Hence, we have
\begin{equation}
v_{mn}^I= V(\frac{2^R{-}1}{P},\frac{2^{2R}{-}1}{P}, \frac{1}{\rho_{m,n}^{2}},\frac{1}{\rho_{I,n}^{2}} ).
\end{equation}

To derive the expression of $h_{mn}^I$, we first perform the following integration
\begin{align}
H(\eta,\gamma_1,\gamma_2)&= \mathbb{P} \{ R<\log(1+\frac{Px}{Py+1}) \} \\ \nonumber                         &= \int\limits_0^\infty \exp\big(\frac{(2^R-1)(Py+1)\gamma_1}{P}\big) p(y) dy \\ \nonumber
                         &= \frac{\gamma_2 \exp(-\eta \gamma_1)}{\gamma_2+P \gamma_1 \eta}.
\end{align}
Hence, the probability $h_{mn}^I$ is given by
\begin{equation}
h_{mn}^{I}= H(\frac{2^R-1}{P},\frac{1}{\rho_{m,n}^2},\frac{1}{\rho_{I,n}^2})
\end{equation}

\section*{Appendix B}
\section*{Proof of Lemma 1}
Taking the derivative of $\mu_1^{\text{SBC}}$ and $\mu_{u_1}^{\text{SBC}}$ with respect to
$\beta_{11}$ yields the following
\begin{align}
\label{dmu1}
\frac{\partial \mu_1^{\text{SBC}}}{\partial \beta_{11}} &= g_{s_1d}^r-f_{s_1d}-T_1 \\
\label{dmuu1} \frac{\partial \mu_{u_1}^{\text{SBC}}}{\partial \beta_{11}}&{=}
\frac{f_{rd}
(g_{s_1d}^r{-}T_1{-}f_{s_1d})(T_1{+}f_{rd}{-}\beta_{11}(T_1{+}g_{rd}^{s_1}))}{\big((1-\beta_{11})T_1+f_{rd}-\beta_{11}g_{rd}^{s_1}\big)^2}\\
\nonumber &
+\frac{f_{rd}(T_1+g_{rd}^{s_1})(T_1{+}f_{s_1d}{+}\beta_{11}(g_{s_1d}^r+T_1-f_{s_1d}))}{\big((1-\beta_{11})T_1+f_{rd}-\beta_{11}g_{rd}^{s_1}\big)^2}
\end{align}
From (\ref{dmu1}), we can see that $\mu_1^{\text{SBC}}$ is a monotonically
decreasing function in $\beta_{11}$ because, from definition,
$f_{s_1d}$ is greater than $g_{s_1d}^{r}$. On the other side, from
(\ref{dmuu1}), we can show that $\mu_{u_1}^{\text{SBC}}$ is monotonically
increasing in $\beta_{11}$ if the following condition is satisfied
\begin{equation}
\label{cond1} f_{rd}-g_{rd}^{s_1} \leq
\frac{g_{s_1d}^r(T_1+f_{rd})}{T_1+f_{s_1d}}.
\end{equation}
Since the value of $\mu_1^{\text{SBC}}$ at $\beta_{11}{=}{0}$ is greater than
the value of $\mu_{u_1}^{\text{SBC}}$ at $\beta_{11}{=}{0}$ and $\mu_1^{\text{SBC}}$
decreases monotonically with $\beta_{11}$ while $\mu_{u_1}^{\text{SBC}}$
increases when (\ref{cond1}) is satisfied, the optimal solution of
(\ref{opt_simple_SBC}) occurs when $\mu_{1}^{\text{SBC}}=\mu_{u_1}^{\text{SBC}}$. Thus,
under the above condition in (\ref{cond1}), the optimum
$\beta_{11}$ is given by
\begin{equation}
\beta_{11}^*= \frac{T_1}{T_1+g_{rd}^{s_1}}
\end{equation}
and, hence, the maximum stable throughput for $s_1$ is given by
\begin{equation}
\label{max_throu_SBC}
\lambda_{s}^{1^*}=(1{-}\frac{T_1}{T_1{+}g_{rd}^{s_1}})(f_{s_1d}{+}T_1){
+} \frac{T_1}{T_1{+}g_{rd}^{s_1}} g_{s_1d}^r
\end{equation}

\section*{Appendix C}
\section*{Proof of Lemma 2}
It is clear, from (\ref{mu_DBC}), that $\mu_1^{\text{DBC}}$ decreases monotonically with $\alpha_{11}$. On the other hand, 
the derivative of $\mu_{u_1}^{\text{DBC}}$ with respect to $\alpha_{11}$ is given by
\begin{align}
\frac{\partial \mu_{u_1}^{\text{DBC}}}{\partial \alpha_{11}}&{=}\resizebox{.9\hsize}{!}{$  \frac{ f_{rd}(T_1{+}f_{s_1d}{+}2\alpha_{11}(g_{s_1d}^r{-}T_1{-}f_{s_1d}))(T_1{+} \alpha_{11}(f_{rd}{-}T_1{-}g_{rd}^{s_1}))}  {((1-\alpha_{11})T_1+\alpha_{11}f_{rd}-\beta_{11}g_{rd}^{s_1} )^2} $} \nonumber \\
& {-} \resizebox{.9\hsize}{!}{$ \frac{f_{rd}( f_{rd}-T_1-g_{rd}^{s_1} )(\alpha_{11}(f_{s_1d}+T_1) +\alpha_{11}^2(g_{s_1d}^r-f_{s_1d}-T_1))}{((1-\alpha_{11})T_1+\alpha_{11}f_{rd}-\alpha_{11}g_{rd}^{s_1} )^2} $}
\end{align}
It is obvious from (\ref{mu_u1_DBC}) that $\mu_{u_1}^{\text{DBC}}$ is a quadratic over linear function in $\alpha_{11}$, and hence, it is a concave function in $\alpha_{11}$ for the range of $\alpha_{11}$ from zero to one \cite{boyd2004convex}.
Besides, $\mu_{u_1}^{\text{DBC}}$ increases at $ \alpha_{11}{=}0$, because the slope of $\mu_{u_1}^{\text{DBC}}$ at $\alpha_{11}{=}0$ is positive and given by
\begin{equation}
\frac{f_{rd}T_1(T_1+f_{s_1d})}{((1-\alpha_{11})T_1+\alpha_{11}f_{rd}-\alpha_{11}g_{rd}^{s_1} )^2}
\end{equation}
Note that the two functions $\mu_1^{\text{DBC}}$ and $\mu_{u_1}^{\text{DBC}}$ satisfy the following three conditions. First, the value of $\mu_1^{\text{DBC}}$ at $\alpha_{11}={0}$ is greater than the value of $\mu_{u_1}^{\text{DBC}}$ at the same point. Second, $\mu_1^{\text{DBC}}$ is monotonically decreasing function in $\alpha_{11}$. Third, $\mu_{u_1}^{\text{DBC}}$ is concave function with positive slope at $\alpha_{11}{=}0$. Therefore, the optimum value of $\alpha_{11}$ is obtained at the intersection point between $\mu_1^{\text{DBC}}$ and $\mu_{u_1}^{\text{DBC}}$ if the slope of $\mu_{u_1}^{\text{DBC}}$ is positive at this point because this means that $\mu_{u_1}^{\text{DBC}}$ increases monotonically with $\alpha_{11}$ from $\alpha_{11}{=0}$ until the intersection point. $\mu_{u_1}^{\text{DBC}}$ has a positive slope at the intersection point if the following condition satisfied
\begin{equation}
\label{cond2}
f_{rd}-g_{rd}^{s_1}\leq \frac{(T_1+g_{rd}^{s_1})^2(f_{s_1d}+T_1)}{T_1(f_{s_1d}+T_1-g_{s_1d}^r)}{-}(T_1+2g_{rd}^{s_1})
\end{equation}
Under the above condition, the optimum solution of (\ref{opt_simple_DBC}) occurs when $\mu_1^{\text{DBC}}{=}\mu_{u_1}^{\text{DBC}}$, hence the optimum value of $\alpha_{11}$ is given by
\begin{equation}
\alpha_{11}^*= \frac{T_1}{T_1+g_{rd}^{s_1}}
\end{equation}
and, the maximum stable throughput for $s_1$ is given by
\begin{equation}
\lambda_{s}^{1^*}=(1{-}\frac{T_1}{T_1{+}g_{rd}^{s_1}})(f_{s_1d}{+}T_1){ +} \frac{T_1}{T_1{+}g_{rd}^{s_1}} g_{s_1d}^r
\end{equation}
which is exactly the same stable throughput for the SBC scheme in (\ref{max_throu_SBC}).

\bibliographystyle
{IEEEtran}
\bibliography{IEEEabrv,edit_draft}
\end{document}